\newcommand{\calD}{\mathcal{D}}
\newcommand{\calP}{\mathcal{P}}
\definecolor{burntorange}{rgb}{0.75, 0.34, 0} 
\title{Harvesting Brownian Motion: Zero Energy Computational Sampling}
\date{\today}
\author{David {Doty}}{Computer Science, University of California Davis, The United States of America \and \url{https://web.cs.ucdavis.edu/~doty/}}{doty@ucdavis.edu}{https://orcid.org/0000-0002-3922-172X}{NSF grants 2211793, 1900931, and CAREER-1844976.}
\author{Niels {Kornerup}}{Computer Science, University of Texas at Austin, The United States of America \and \url{https://nielskornerup.github.io/}}{nielskornerup@utexas.edu}{https://orcid.org/0000-0002-1519-726X}{}
\author{Austin {Luchsinger}}{Electrical and Computer Engineering, University of Texas at Austin, The United States of America \and \url{https://sites.google.com/utexas.edu/austinluchsinger/home}}{amluchsinger@utexas.edu}{https://orcid.org/0000-0002-8180-9762}{Carroll H.\ Dunn Endowed Graduate Fellowship in Engineering}
\author{Leo {Orshansky}}{Computer Science, University of Texas at Austin, The United States of America}{orshaleo@utexas.edu}{https://orcid.org/0009-0005-4445-7517}{}
\author{David {Soloveichik}}{Electrical and Computer Engineering, University of Texas at Austin, The United States of America \and \url{https://users.ece.utexas.edu/~soloveichik/}}{david.soloveichik@utexas.edu}{https://orcid.org/0000-0002-2585-4120}{Schmidt Sciences Polymath award.}
\author{Damien {Woods}}{Hamilton Institute and Department of Computer Science, Maynooth University, Ireland \and \url{https://dna.hamilton.ie/woods/}}{damien.woods@mu.ie}{https://orcid.org/0000-0002-0638-2690}{European Research Council (ERC) grant 772766 (Active-DNA), and Science Foundation Ireland (SFI) grants 20/FFP-P/8843 and 18/ERCS/5746.}
\authorrunning{D. Doty, N. Kornerup, A. Luchsinger, L. Orshansky, D. Soloveichik, D. Woods}
\keywords{Thermodynamics of computation, random walks, low energy computation, distribution sampling}
\begin{document}
\maketitle

\begin{abstract}
The key factor currently limiting the advancement of computational power of electronic computation is no longer the manufacturing density and speed of components, but rather their high energy consumption. 
While it has been widely argued that reversible computation can escape the fundamental Landauer limit of $k_B T\ln(2)$ Joules per irreversible computational step, there is disagreement around whether indefinitely reusable computation can be achieved without energy dissipation. 
Here we focus on the relatively simpler context of sampling problems, which take no input, so avoids modeling the energy costs of the observer perturbing the machine to change its input.
Given an algorithm $A$ for generating samples from a distribution, we desire a device that can perpetually generate samples from that distribution driven entirely by Brownian motion.
We show that such a device can efficiently execute algorithm $A$ in the sense that we  must wait only $O(\text{time}(A)^2)$ between samples. 
We consider two output models:
Las Vegas, which samples from the exact probability distribution every $4$ tries in expectation, and Monte Carlo, in which every try succeeds but the distribution is only approximated.
We base our model on continuous-time random walks over the state space graph of a general computational machine, with a space-bounded Turing machine as one instantiation.
The problem of sampling a computationally complex probability distribution with no energy dissipation informs our understanding of the energy requirements of computation, and may lead to more energy efficient randomized algorithms.
\end{abstract}


\section{Introduction}

\subparagraph*{Motivation}
While time and space are the most commonly used metrics of computational complexity, energy is becoming an increasingly important measure.
The exponential increase in the world's total computational power far exceeds the rate of growth of the energy production capacity\footnote{Recently estimated at 68\% versus 8\% per year, respectively~\cite{frankreversible}.},
and the environmental impact of computation is severe.
For example GPT-3 required $1{,}287$ MWh of electricity to train, resulting in an estimated carbon footprint of around $502$ tons of $CO_2$~\cite{LCL22}.
As of 2022 the information and communications technology sector resulted in around $2\%$ of global $CO_2$ emissions~\cite{LCL22}, and this percentage is likely to continue growing.

Since increasing the efficiency of conventional computation is expected to make a limited impact with such trends, new energy-efficient computational paradigms may become essential~\cite{frankreversible}.
While modern transistor-based computers remain far from the thermodynamic limits of computation, biochemical systems can perform computation where the energy costs are much closer to this regime (e.g. \cite{BAPE+03}) and provide an important proof-of-principle.
Whatever form future low-energy computation takes, the theoretical framework of reversible computing is key for reasoning about the thermodynamics of computation.

\subparagraph*{Reversible computation}
Consider deterministic computation abstracted as a Turing machine (or some more general model). 
Knowing the current configuration fully determines the configuration after the next computational step (successor configuration).
However, the current configuration often does not fully determine the previous configuration (predecessor configuration).
Whenever this happens, a computational step is said to be irreversible.
A typical example of an irreversible operation is erasure in which after a bit is erased we lose the information of whether the bit was 0 or 1.

As argued by Landauer and Bennett, computation has an energy lower bound
of $kT \ln 2$ per irreversible step (or more concretely $kT \ln n$ if there are $n$ possible predecessor configurations,  where the scaling factor $kT$, in units of Joules, is the Boltzmann constant times temperature)~\cite{Lan61,bennett1973logical}.
However, \emph{reversible} computation, in which no step is irreversible, has no such per-step energy dissipation lower-bound.
Moreover, it is possible to reprogram any deterministic computation as a sequence of logically reversible steps, provided that the computation is allowed to save a copy of its input~\cite{Ben03}. 
While general methods of constructing reversible computations from irreversible computation come at some asymptotic cost to the space and/or time complexity of the program, the overhead can be made surprisingly small~\cite{Ben03, LMcT00, SM13, AGS15}.
We include additional justification for considering reversible computation in \cref{sec:discussion}.

Although reversible computation gives a way to overcome the $kT \ln 2$ Landauer limit per computational step, there is some controversy about whether meaningful computation can be performed with zero energy dissipation.
First, Bennett argued that his construction for translating any deterministic Turing machine computation into a (reversible) 1D unbiased random walk ``does not deserve to be called a computation''~\cite{bennett1973logical} since
catching the machine in an output state will occur with negligible probability. 
Bennett gives a simple remedy by  putting a small (constant) bias on each forward step, i.e.~an energetic cost per step (one could instead place a single, deep and energetically expensive, trap at the final state). 
Second, a line of work has argued that {\em reusable} computation without energy dissipation is impossible, even for reversible computing~\cite{norton2013brownian,strasberg2015thermodynamics}.
The basic argument is that, after a computation has reached equilibrium, resetting the computer
to its unique initial state  involves an unavoidable energy cost to decrease entropy. 
This additional cost scales logarithmically with the length of the computation.
Our work will address both of these criticisms. 

While more niche in its applications than function computation, distribution sampling is an important problem. For example sampling large prime numbers is essential for creating cryptographically secure RSA keys and sampling from the co-domain of hash functions is the primary mechanism behind many proof of work schemes \mbox{\cite{RSA78,NS09}}. Distribution sampling can even be a hard problem---in fact boson sampling is the problem behind many recent attempts at quantum supremacy \cite{AA10}.

While we are not aware of zero-energy sampling of arbitrary probability distributions that are defined with respect to a sampling \emph{algorithm}, it is known that arbitrary distributions can be generated without energy dissipation~\cite{owen2019number,cappelletti2020stochastic}.
However, the distribution is ``hard-coded'' and no computation is performed; such constructions do not give guarantees about the time-complexity of sampling, nor device complexity, as a function of the generating algorithm.
In contrast, our constructions involve stochastic processes that execute computation, enabling us to upper bound the time complexity required to produce our samples in terms of the computation's time-complexity.

\subparagraph*{Our results}
We design machines that only change state according to the random perturbations of Brownian motion.
By itself such a machine may not seem very useful---even if the unbiased random walk causes the machine to perform a useful computation---intuition suggests it will not be able to ``lock in'' its answer.
Nonetheless, we design machines that are reversible and adiabatic 
in a way where after giving the random walk sufficient time to explore the computation graph, an observer can measure its state and will likely find that some useful computation has been completed (specifically, with constant probability per computation). 
Whereas  Bennett~\cite{bennett1973logical} added an energetic bias to drive his reversible computation to the output state, we avoid any such energetic cost with a (mere) factor 2 increase in state space/computation time. 
Specifically, we design machines where either: (1) after sufficient time the machine will have a constant probability of producing output that is distributed i.i.d. according to our desired probability distribution, or (2) after sufficient time the machine will always be in a state containing samples from a distribution close to the target.
We call these kinds of machines Las Vegas and Monte Carlo respectively in homage to the similar categories of randomized algorithms first defined in~\cite{Els80}.
The time to wait between samples scales as $O(\text{time}(A)^2)$, where $A$ is a reversible Turing machine generating samples from the desired probability distribution given uniform-random input strings.

Similar to the case with Maxwell's Demon, observing our zero energy machine and any post-processing of the observed information will require energy, a cost that seems unavoidable if one wishes to read and record samples.  
However, both our Las Vegas and Monte Carlo machines require small energy expenditure on behalf of the observer.
In the Las Vegas construction, the observer will measure the machine at specified intervals until (only) a single trit indicates that an output value is present, which will occur after each measurement with constant probability.
In the Monte Carlo construction, the observer's measurements will always contain an output value that is close to the desired distribution (i.e.\ zero additional memory storage required for the observer).

\subsection{Related work}

\subparagraph*{No Go Theorems for Brownian Distribution Sampling}\label{sec:no-go}
Prior works (e.g., \cite{norton2013brownian,strasberg2015thermodynamics}) have provided no go theorems suggesting that computation has an unavoidable thermodynamic cost.
While these results are obstacles that rule out some strategies for Brownian computation, they do not apply to our constructions.

In \cite{norton2013brownian} the authors object that (1) without energy bias, a (reversible) computation that enters $t$ states will only have probability density $1/t$ on the terminal state at thermodynamic equilibrium, and (2) the free energy released by allowing the machine to hit thermodynamic equilibrium from its initial configuration (at temperature $T$) is at least $kT \ln t$.
Problem (1) can be addressed by adding a free energy trap to the terminal state; however, the authors note that this further increases the free energy released when the machine hits thermodynamic equilibrium.
Problem (2) is potentially much more problematic, as any machine that needs to be reset before it can be reused must expend $kT \ln t$ energy to do so.
This problem is potentially much worse in the setting of distribution sampling, as there is additional free energy released by the process due to the entropy inherent in the distribution being sampled.

{We overcome both problems: problem (1) by designing a machine that exploits redundancy in its computation and problem (2) by the machine not needing to be reset. 
In other words, our construction does not require any external manipulation of its state, in particular of neither input nor output states, between recorded samples.}
The key insight from our work is a mechanism for obtaining multiple samples from a desired distribution without needing to perturb the system from thermodynamic equilibrium.
In fact, our constructions behave better the closer the system gets to reaching thermodynamic equilibrium between our samples.
The only thermodynamic costs associated with our constructions are those required of an observer to record a small (constant per sample) number of bits of metadata about the state of our machine and the sample from the target distribution.

While \cite{norton2013brownian} argue that energy expenditure logarithmic in the computation time is unavoidable, the authors of \cite{strasberg2015thermodynamics}  make an even stronger claim that constant energy must be dissipated per step---assuming the duration of the computation is not known ahead of time.
In \cite{strasberg2015thermodynamics} the authors argue that this is a natural assumption to make, as it is in general impossible to know the number of steps before a machine halts.
We note that this objection is not applicable to most reasonable machines that perform distribution sampling, which will often have asymptotically proven upper bounds on their running times.
Even if the number of steps for a machine is unknown, our construction's version of a free energy trap can be easily modified to scale with the time of the computation as long as the machine takes the same number of steps for all choices of its randomness.\footnote{In fact if we do not care about the time between samples for our construction, we can force our construction to simulate all choices of randomness to artificially meet this requirement. See \cref{sec:discussion} for details.}

\subparagraph*{Molecular computing}
Distribution sampling by chemical means has received recent theoretical attention.
Work showing systematic ways to build up probability distribution includes ``one-shot'' arbitrary probability generators that---unlike our constructions---cannot be reused~\cite{fett2007synthesizing},
as well as constructions generating the desired distribution at the equilibrium of the chemical reaction network~\cite{cardelli2018programming}. 
Although computing at the equilibrium of the chemical reactions, these reaction equations don't satisfy detailed balance and thus cannot be implemented without energy dissipation. 
More recently it was shown that detailed balanced systems of chemical reactions can produce arbitrary probability distributions without a power supply~\cite{cappelletti2020stochastic}.
However, this work does not provide a method to embed arbitrary probability generating computation and does not connect the time complexity of this computation to the interval of sampling.

\subparagraph{Partially irreversible algorithms}
Our paper seeks to understand the degree to which energy use can be driven all the way to zero in a reversible algorithm,
showing how to dispense with the small (but positive) energy use per step that Bennett~\cite{bennett1973logical} used to increase the probability of seeing an output state.
Another line of work on energy-efficient computation in some sense goes in the opposite direction:
allowing some logically irreversible steps, conceding that each one costs some energy, while seeking to minimize the number of irreversible steps~\cite{demaine2016energy, tyagi2016toward, demaine2021efficient}.

\subparagraph*{Road map}
In \cref{sec:prelim} we discuss our formal models for general machines, kinetics, and thermodynamics of computation.
In \cref{sec-LV} and \cref{sec-MC} we present our constructions applied at a high-level to a general computational device.
We make these constructions more explicit for the language of Turing machines in \cref{sec:concrete}.
We conclude the paper with discussion and open questions in \cref{sec:discussion}.

\section{Preliminaries}\label{sec:prelim}
\subsection{General machines}

A general machine is an abstract model of computation that lets us discuss the relationship between the syntax of the machine's configurations and the semantics of its operation.
Because we are interested in a physically realistic, bounded device, we define computation with respect to a finite configuration space.

\begin{definition}
A \emph{general machine} $M$ is a tuple $(\Sigma, f)$, 
where $\Sigma$ is a finite set of possible configurations, $f: \Sigma \to \calP(\Sigma)$ is the transition function, and $\calP(\Sigma)$ is the power set of~$\Sigma$.
Each configuration in $\Sigma$ is a tuple of the form $(w,m,o)$, where $w$ is the internal work register, $m$ is the metadata register, and $o$ is the output register.
\end{definition}

Although any particular general machine has a finite configuration space,
to perform computation on arbitrary inputs
we can define a family $M_n = (\Sigma_n, f_n)_{n \in \mathbb{N}}$ of general machines for inputs of increasing size $n$.
Our subdivision of configurations $\Sigma$ into tuples of the form $(w,m,o)$ (aka registers) gives some structure that will be useful when designing general machines.
The above definition is meant to be general enough to encompass any reasonable notion of computation on a finite, discrete state space.
For example, a space-bounded Turing machine is a general machine where $\Sigma$ is the set of all configurations of the (finite) tape and head, and $f$ encodes the transition rules of the machine.
We could define the work, metadata, and output registers of the configurations in many different natural ways; for example metadata could be encoded in the head state and the output register could be a particular region on the Turing machine tape (or an output tape for multi-tape machines).

Note that our general definition allows arbitrarily large changes to the configuration to occur in one step, which may be unrealistic. 
It thus will be important that there is a ``local'' model, such as a Turing machine that underlies the general machine. Indeed, we take care to provide such a model in \Cref{sec:concrete}.

Intuitively a general machine is deterministic if given any initial configuration $I$, for all $t\in\mathbb{N}$, there is a unique configuration that the machine will be in after $t$ time-steps.
As in~\cite{Ben89}, we define computation as reversible if each configuration has a unique predecessor and successor. 
Formally:
\begin{definition}
We say a general machine $M$ is \emph{deterministic} if every set in the codomain of the transition function $f$ has at most one element---that is each configuration has at most one successor configuration.
We say that a general machine $M$ is \emph{reversible} if it is deterministic and $f$ is injective---that is each configuration also has at most one predecessor configuration.
\end{definition}
Equivalently, a machine is reversible if there exists a deterministic reverse machine $(\Sigma,f^{-1})$, where $f^{-1}(f(C))=f(f^{-1}(C))=C$ for all $C\in\Sigma$.

\begin{definition}
Configuration $C \in \Sigma$ is a \emph{terminal configuration} if
$f(C) = \emptyset$. That is $C$ has no successor configurations.
\end{definition}

We define the (discrete time) dynamics of a general machine $M$ as follows.
The machine starts at time $t=0$ in some initial configuration $I \in \Sigma$. 
During each time-step, if $M$ is in some non-terminal configuration $C$, in the next time-step it will adopt a uniformly random configuration in the set $f(C)$.
If instead $C = (w, m, o)$ is terminal, then the machine remains in this configuration, and we say that the machine halted with output $o$.\

\subsection{Energy and computation}
Thus far, we have discussed computational machines abstractly in terms of their logical state space. 
For physical realizations, we make the following somewhat standard assumptions~\cite{ouldridge2018importance}.
Any physical realization of a general machine will necessarily map each logical configuration to a physical configuration, each having a (potentially different) amount of free energy $\Delta G$.
We also assume that the machine is submerged in a sufficiently large external heat reservoir such that its internal temperature remains constant, keeping the values of $\Delta G$ constant even after the machine undergoes transitions that consume or release heat.

Further we make the assumption that the physical state of the realization of a general machine will be completely determined by the logical state. 
In other words, the physical realization will have the same state space as the general machine (with any additional degrees of freedom captured in the free energies of configurations).
In \cref{append:thermodynamics}, we formally connect our abstract computational model to thermodynamic and kinetic energy models.
Here we give a more direct definition for an \emph{adiabatic} (constant energy) machine and describe how such machines evolve over time.
These statements are sufficient to describe the behavior of our constructions.

\begin{definition}
    Let $M = (\Sigma, f)$ be a general machine.
    The adiabatic machine for $M$ is an undirected graph $P_M = (V, E)$ with vertices $V = \Sigma$ and edges $E = \{(U,W) \mid W \in f(U) \}$ (i.e., if $W$ is a successor configuration of $U$ then they are connected by an edge). 
\end{definition}

An adiabatic machine is formally defined as a graph where the states are represented as nodes and transitions as edges.
As described in \cref{append:thermodynamics}, we define the evolution of such a machine using the language of continuous time random walks on graphs.

\begin{restatable*}{corollary}{adiabaticevolution}
    Let $P_M$ be an adiabatic machine. $P_M$ evolves via an unbiased random walk and converges to a uniform stationary distribution.
\end{restatable*}

\subsection{Distribution Sampling}
\label{sec:distribution-sampling}

In this paper we build machines that sample from a desired distribution.
This is a more straight-forward problem than function computation, as we do not need to worry as much about energy dissipation associated with changing the input.

A standard way to define a distribution sampler (or random number generator) is as a probabilistic algorithm which has access to fair coin flips and output samples from distribution $\mathcal{D}$ when executed. 
It is well-known that the randomized algorithm's randomness can be moved ``up-front'' as input to a deterministic algorithm. Thus we define:

\begin{definition}
    Let $\mathcal{D}$ be a probability distribution over finite domain $Y$.
    We say a function $g: \{0,1\}^r \to Y$ is a \emph{$\mathcal{D}$-generator} if $g(x) \sim D$ when $x$ is chosen uniformly at random. 
\end{definition}

Note that for all distributions other than uniform, a $\mathcal{D}$-generator cannot be directly computed in a reversible manner since that would require $g$ to be one-to-one.
We can get around this restriction by instead reversibly computing a function $h(x) = (x,g(x))$, which is reversible because it preserves the input $x$.

In theory one could---given a machine $M$ that computes a $\calD$-generator function $h$---run $M$ and pre-compute
the value of $h$ on all inputs in $\{0,1\}^r$.
Doing so would allow the construction of an 
energy and time efficient device that samples from $\calD$ via a lookup table,
albeit physically of exponential (in $r$) size.
This is similar to how all boolean functions have an exponential sized constant depth circuit.
As such we will only be concerned with the task of designing devices that are constructed from a description of $M$ in time that is polynomial in both $r$ and the size of $M$.
%
\section{Las Vegas Sampling}\label{sec-LV}
We first give our formal definition of what it means for a general machine to be a Las Vegas sampler.
Intuitively, we can generate the exact probability distribution at the cost of having to reject some samples as invalid.

\begin{definition}
Let $M = (\Sigma, f)$ be a general machine where each configuration $(w,m,o) \in \Sigma$ has metadata $m \in \{-1, 0, 1\}$.
We say that a physical embedding of $M$ is a \emph{Las Vegas sampler} for a distribution $\calD$ if the accepted outputs of the following procedure starting from any initial configuration are independent samples from $\calD$ for any $T > 0$:
\begin{quoting}[indentfirst=false]
Measure the metadata $m_\mathrm{prev}$ of the current configuration every $T$ time units until $m_\mathrm{prev} \neq 0$.
Then measure the metadata $m$ of the current configuration every $T$ time units  and do the following.
If $m = -m_\mathrm{prev}$,
then let $m_\mathrm{prev} = -m_\mathrm{prev}$ and accept the output of the current configuration;
otherwise, reject.
\end{quoting}
\end{definition}

\begin{definition}
A Las Vegas sampler is \emph{$(T,\delta)$-efficient} if starting from any configuration and waiting $T$ time units, the probability of measuring metadata 
 $m = -1$ is at least $1-\delta$ and the probability of $m=1$ is at least $1-\delta$. 
\end{definition}



Note that if a Las Vegas sampler is $(T,\delta)$-efficient, then:
(1) the initialization phase before we find $m_\text{prev} \neq 0$ takes longer than $c T$ time with probability at most $\delta^c$;
(2) the time between successive accepted outputs is longer than $c T$ time with probability at most $\delta^c$.


\subsection{The general machine}

In this section we describe how to construct general machines with physical embeddings that are an efficient Las Vegas sampler for desired distributions. 
Importantly, these general machines have the same configuration space and connectivity as the more detailed Turing machine-level construction we develop in \Cref{sec:concrete}, and thus the Turing machine is an instantiation of our general machine. 
Introducing the more abstract general machine first significantly eases understanding.
Our construction is divided into three parts: the randomizer, the computation, and the output holding regions;
below we describe at a high level how these parts are designed for a general machine whose physical embedding is a Las Vegas sampler.

As a starting point we take the configuration space of a reversible Turing machine $M$ computing a $\mathcal{D}$-generator as described in \Cref{sec:distribution-sampling}.
The requirement that the Turing machine be reversible is not restricting as any such machine can be converted to a reversible one with only polynomially more time steps~\cite{Ben89}.
As the Turing machine is reversible, its configuration space is a collection of chains.
We assume that $M$ runs in at most $T$ steps, by which we mean that the longest length of a chain in $M$'s configuration space is at most $T$.

Our eventual complete general machine construction is a layered graph in the sense that the nodes are partitioned such that nodes in layer $L_i$ only have edges to layers $L_{i-1}$ and $L_{i+1}$.
A simple example of our construction for the trivial machine $M$ with a single bit of input can be seen in \cref{fig:LV-simple}, and the full general machine construction is illustrated schematically in \cref{fig:LV-general}.
Below we describe how these general machines are constructed, assuming we start with a reversible Turing machine $M$.

\subparagraph*{Adding the output holding region}
To have a constant probability of producing a new output after measurement, it is essential that our computation's graph has a constant fraction of its configurations containing the output.
We achieve this by artificially boosting the length of $M$'s computation by adding a tail of redundant configurations containing the output.
Specifically, we elongate `short' computations: we  modify $M$ by elongating paths so that all the chains in its state space have length exactly $T$.
We then further extend the length of each chain $c$ with $2T+r+1$ nodes that all contain the value in the output register from the terminal configurations of $c$.
We let $M'$ denote the resulting general machine.

\subparagraph*{The randomizer}
Let $r$ be the number of random bits that machine $M$ starts with on its random bitstring input tape. We next design a {\em randomizer} that allows $M'$ to  change its random input in order to sample from the correct distribution.
Specifically the randomizer's configuration graph has  two outer {\em sides} and has the property that any random walk over its state space that starts on one {\em side} and arrives at the {\em other side} will end in a state that has a uniformly random bitstring in a work register.
We design the randomizer to have this property by letting it randomize each bit of any initial $r$ bit configuration during a different step.
We achieve this by using the $r+1$-dimensional butterfly graph as the configuraiton graph and assigning the source (left side) and sink (right side) nodes of this graph to length $r+1$ bitstrings.

The randomizer, shown in \cref{fig:LV-simple}, is composed of $r+1$ columns that each contain $2^{r+1}$ nodes.
Each column of nodes contains one node whose work register contains each length $r+1$ bit-string.
There are edges between the nodes in columns $i$ and $i+1$ where the bitstring in their work registers are either the same or differ only on the value of the $i$'th bit.

\begin{lemma}\label{lem:randomizer-independent-samples}
    An unbiased random walk that starts on the left of this graph and ends on the right will end up in a node whose work register contains $r+1$ bits of which the first $r$ are uniformly random.
\end{lemma}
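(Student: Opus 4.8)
The plan is to analyze the structure of the butterfly graph explicitly and track how the work register evolves along any walk from left to right. The key observation is that the $(r+1)$-dimensional butterfly graph has $r+1$ columns of nodes, where each node in column $i$ is labeled by a length-$(r+1)$ bitstring, and an edge connects a node in column $i$ to a node in column $i+1$ exactly when their bitstrings agree everywhere except possibly on bit $i$. Thus as the walk passes from column $i$ to column $i+1$, the value of bit $i$ is the only coordinate that is allowed to change, and all other bits are frozen. First I would make precise the notion of a walk that ``starts on the left and ends on the right'': it is a walk whose first vertex lies in column $0$ and whose last vertex lies in column $r$, and which therefore traverses each of the $r$ inter-column edge-sets. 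The content of the lemma is about the distribution of the final bitstring conditioned on the starting side, over the randomness of the unbiased walk.

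The main step is to show that each of the first $r$ bits of the terminal work register is an independent fair coin. I would argue as follows. Consider the last time the walk crosses from column $i$ to column $i+1$ (equivalently, the last transition across that boundary before the walk settles on the right side). Because the walk ends on the right, for each $i \in \{0, \dots, r-1\}$ there is such a final crossing, and after this crossing bit $i$ is never touched again, so its value is locked in. At that crossing, the walk moves to a neighbor in column $i+1$ chosen uniformly among the allowed transitions; since the only freedom is in bit $i$ (it may stay the same or flip), and the two choices are symmetric, the resulting value of bit $i$ is uniform on $\{0,1\}$ regardless of history. Because distinct bits $i$ are updated on distinct column boundaries, these locked-in values are mutually independent, giving that bits $0, \dots, r-1$ are i.i.d.\ fair coins. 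Bit $r$ is the extra ``sign'' or bookkeeping bit that the construction does not require to be uniform, which is why the statement only claims the first $r$ bits are uniform.

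The hard part will be making the ``last crossing'' argument fully rigorous in continuous time without circularity: I must justify that conditioning on the event ``the walk ends on the right side'' does not bias the coin flip that determines bit $i$ at its final crossing. The clean way to handle this is a coupling or symmetry argument: I would exhibit an involution on walk-trajectories that flips the value chosen for bit $i$ at its final crossing while preserving the event of ending on the right and preserving all other bits, and show this involution is measure-preserving for the unbiased walk. Since flipping bit $i$ does not change which transitions are subsequently available (bit $i$ is never read again), the involution maps valid terminating trajectories to valid terminating trajectories of equal probability, forcing the conditional distribution of bit $i$ to be uniform. Iterating this over the $r$ boundaries, and noting the involutions act on disjoint coordinates and hence commute, yields full independence.

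A subtlety I would flag is the precise meaning of ``ends on the right'' in the surrounding construction: in the full machine the randomizer is glued to the computation and output regions, so a walk does not literally stop upon reaching column $r$. I would therefore phrase the lemma as a statement about the first passage to the right side, or about the work-register contents at the moment the walk exits the randomizer on the right, and verify that the symmetry argument is unaffected by what the walk does after exiting. Given the explicit and highly symmetric edge structure of the butterfly graph, I expect the bulk of the proof to be this careful setup rather than any nontrivial estimate, and no concentration or mixing-time bound is needed for the distributional claim itself.
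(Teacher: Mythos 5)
Your proposal is correct and follows essentially the same route as the paper's proof: both identify, for each bit $i$, the last crossing of the column boundary that can change that bit, observe that this final crossing locks in the bit's value, and argue that the unbiased walk's choice at that crossing is a fair coin, with distinct bits settled at distinct boundaries yielding independence. The only difference is that you additionally supply a measure-preserving involution to justify that conditioning on ``this crossing is the last one'' does not bias the choice---a subtlety the paper's one-line appeal to unbiasedness leaves implicit---so your write-up is, if anything, more careful than the original.
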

\begin{proof}
    Let $p$ be any path through this graph that starts on the left and ends on the right.
    Starting from column~0, let $t_1, \ldots t_r$ be the last time-instant where the random walk enters each column $1,\ldots,r$ . We note that the choice of edge taken at time $t_i$ determines the value of bit $i-1$ assigned to the node reached at the end of the walk.
    Since this is an unbiased random walk, the choice of which edge is taken during each of these time-steps is uniformly random.
    Thus the random walk will end with a uniformly random bitstring of length $r$, plus one ancillary bit.
\end{proof}

\subparagraph*{Linking the randomizer and the computation}
Observe that there is one node on either side of the randomizer corresponding to each bit-string of length $r+1$, which gives us two nodes corresponding to each bit-string $b$ of length $r$ by ignoring the last bit.
As shown in \cref{fig:LV-general}, on each side of the randomizer, for each bitstring $b$ of length $r$, we add two copies of the chain in $M'$ corresponding to random input $b$ and connect their starting configurations to both of the nodes in the randomizer corresponding to value $b$.
We finally add edges that link the node for step $t$ in each of these chains to the node for step $t+1$ in the other chain.
Doing this gives us a layered graph where each node has in-degree and out-degree two in each direction, so a random walk on this graph behaves like a one-dimensional random walk on the layers.
We assign metadata value $m=-1$ to all the nodes we added in this process to the left of the randomizer that appear after chain link $T$ (where we know the nodes contain samples from $\calD$ in their output values) and metadata value $m=1$ to the same nodes for the chains on the right of the randomizer.
All other nodes have value $0$.

\begin{figure}
    \centering
    \includegraphics[width=.5\textwidth]{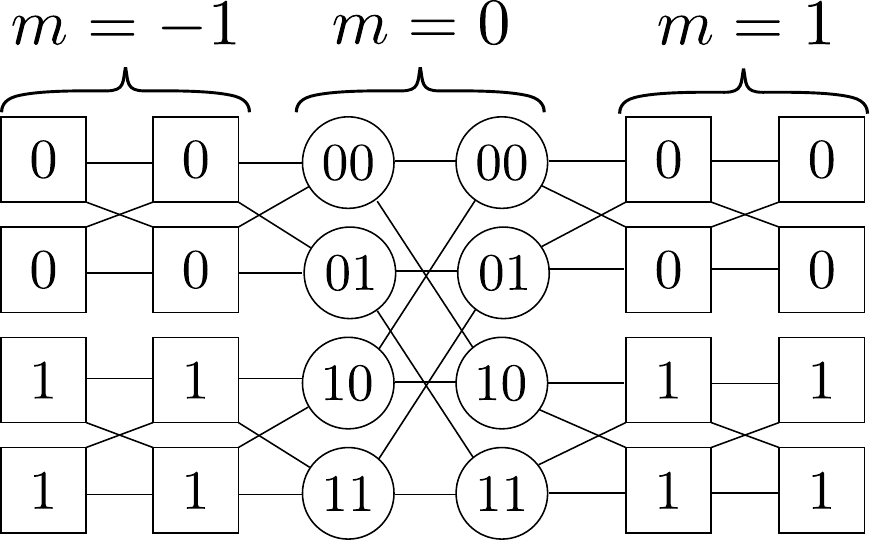}
    \caption{Las Vegas construction, $M'$, applied to a simple Turing machine $M$ with no transition rules and a single bit of input (i.e.~$r=1$). 
    Circles: randomizer with one random bit ($r=1$, for input), one ancillary bit and with metadata $m=0$. 
    Squares: output holding regions labeled with their output bit with metadata $m=-1$ or $m=1$.}
    \label{fig:LV-simple}
\end{figure}

\begin{figure}[b]
    \centering
    \includegraphics[width=\textwidth]{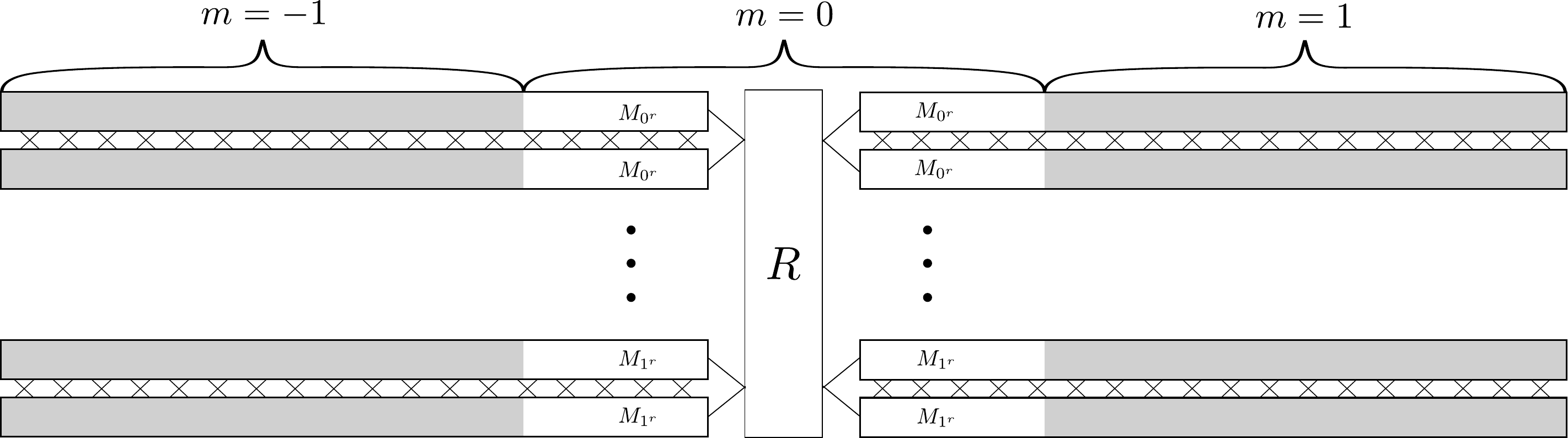}
    \caption{Our general Las Vegas construction $M^\ast$ from a general machine $M$. $R$ is the randomizer which generates random binary input words $b$,  and $M_b$ represents the chain of states for $M$ on input $b$. The grey area represents the output holding states.}
    \label{fig:LV-general}
\end{figure}

\subsection*{The sampling procedure}
We let $M^*$ be the machine constructed above.
We will now give a general strategy for an observer to measure~$M^*$ to get independent samples from the distribution~$\calD$.
The first sample is obtained by waiting time  $\Theta(T^2+r^2)$ between measurements of the metadata register, iterating until a non-zero metadata value is observed.
Then the value of the output register is recorded as the first sample and the observer remembers the last observed metadata bit when it took a sample.
The observer again waits time $\Theta(T^2 + r^2)$ between measurements until the metadata register is $-1$ times the value it had during the last recorded sample, and then records the value of the output register.

The state space of our construction is a graph with diameter $6T + 3r + 3$. Thus waiting time $\Theta(T^2 + r^2)$ between samples is sufficient for there to be a $\geq 1/4$ probability of observing the desired metadata on measurement.
\begin{theorem}\label{lem:gen-lv-sampler}
    Let $M$ be a general machine that samples from a distribution $\calD$. Then applying the above construction and procedure to $M$ yields a general machine $M^*$ whose physical embedding is an adiabatic $(\Theta(T^2+r^2),3/4)$-efficient Las Vegas sampler for $\calD$. 
\end{theorem}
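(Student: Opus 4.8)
The plan is to establish three things in turn: that the physical embedding of $M^*$ is adiabatic, that its accepted outputs are genuinely i.i.d.\ samples from $\calD$ (the Las Vegas property, which must hold for \emph{every} measurement interval $T>0$), and finally that the interval $\Theta(T^2+r^2)$ makes it $3/4$-efficient. For the first point I would simply observe that $M^*$ is constructed as an undirected layered graph obtained from a reversible machine, so it \emph{is} an adiabatic machine in the sense of the definition, and our adiabatic evolution corollary immediately gives that its embedding evolves by an unbiased continuous-time random walk with uniform stationary distribution. The one structural fact I would record up front, to be reused throughout, is that every layer of $M^*$ contains exactly $2^{r+1}$ nodes (two cross-linked chain copies for each of the $2^r$ inputs, and $2^{r+1}$ nodes per randomizer column) and that every node has in- and out-degree $2$ in each direction; consequently the projection of the walk onto the layer index is itself an unbiased random walk on a path of length equal to the diameter $D = 6T+3r+3$.

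For correctness I would argue that whenever the sampling procedure accepts, the recorded output is a fresh uniform draw. The key combinatorial observation is that the two metadata values $m=-1$ and $m=+1$ live, respectively, in the left and right output holding regions, which sit on opposite sides of the randomizer. Hence any two consecutive accepted measurements have opposite sign, and the walk between them must fully traverse the randomizer. I would then decompose the trajectory up to an acceptance time using the strong Markov property: conditioning on the last maximal left-to-right (resp.\ right-to-left) excursion through the randomizer preceding the acceptance, \cref{lem:randomizer-independent-samples} asserts that the input word $b$ emerging from that excursion has its first $r$ bits uniformly distributed, \emph{independently of the entry point and hence of the entire past}. Since the observed holding region for input $b$ contains exactly $g(b)$, the accepted output is distributed as $g(b)\sim\calD$ and is independent of all earlier samples. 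Because this argument only uses the fact that a crossing occurred (and not when we happened to measure), it holds for any interval $T>0$, which is exactly what the Las Vegas definition demands.

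For efficiency I would reduce to the one-dimensional walk identified above. The $m=-1$ region occupies the leftmost $2T+r+1$ layers and the $m=+1$ region the rightmost $2T+r+1$ layers, each an exact $1/3$ fraction of the $D=6T+3r+3$ layers. An unbiased random walk on a path of length $D$ reaches total variation distance below a fixed constant from its uniform stationary distribution in time $\Theta(D^2)$, and since $D=\Theta(T+r)$ with $Tr = O(T^2+r^2)$ we have $D^2 = \Theta(T^2+r^2)$. Choosing the measurement interval to be $cD^2$ for a suitable constant $c$, the layer distribution started from any configuration is within total variation $1/12$ of uniform, so the probability of landing in the left third is at least $1/3 - 1/12 = 1/4$ and likewise for the right third; this is precisely $(\Theta(T^2+r^2),3/4)$-efficiency.

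I expect the independence argument in the second paragraph to be the main obstacle. The delicate point is making rigorous that ``crossing the randomizer re-randomizes independently of the past'' in the presence of arbitrary back-and-forth motion: one must choose the right stopping time (the start of the final successful crossing), verify that the hypotheses of \cref{lem:randomizer-independent-samples} apply to that excursion, and confirm that the uniformity it yields is genuinely conditional on the entire history. The adiabatic and efficiency claims are comparatively routine---the former is essentially definitional and the latter is a standard mixing-time estimate---though I would still need to carefully justify that equal layer sizes together with the degree-$2$ structure make the layer projection an exactly unbiased walk rather than merely an approximately balanced one.
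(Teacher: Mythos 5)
Your proposal is correct and, on the part the paper actually proves, follows the same route: reduce the dynamics to an unbiased one-dimensional walk over the $6T+3r+3$ layers (using the equal layer sizes and the in/out-degree-two structure) and invoke a $\Theta\bigl((T+r)^2\bigr)$ mixing bound to conclude that each measurement sees the desired metadata value with probability at least $1/4$, hence failure probability at most $(3/4)^c$ over $c$ measurements. The only difference is scope: the paper's proof of \cref{lem:gen-lv-sampler} establishes \emph{only} this efficiency claim, leaving the i.i.d.-correctness of accepted outputs implicit in the construction and in \cref{lem:randomizer-independent-samples}, whereas your second paragraph supplies that argument explicitly; the last-crossing/strong-Markov subtlety you flag as the main obstacle is real, but it is exactly the same level of delicacy already present in the paper's own last-entrance-time proof of \cref{lem:randomizer-independent-samples}, so your treatment is, if anything, more complete than the paper's.
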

\begin{proof}
    We note that the machine from the above construction yields a state space where the nodes can be partitioned into layers such that the nodes in layer $i$ are only neighbors of the nodes in layers $i-1$ and $i+1$.
    Additionally each node in an internal layer has two edges to nodes in the layer before and after it.
    Thus, the machine will evolve according to an unbiased random walk over the layers of the graph, which is an unbiased walk on a line of length $6T + 3r + 3$.
    Starting from any node on this line, a continuous time random walk will have probability density at least $1/4$ on the layers where nodes have the desired metadata value after time that is $\Theta(T^2 + r^2)$.
    Thus by waiting this amount of time between measurements, we get a $\geq 1/4$ probability that each measurement will give the desired metadata value.
    After $c$ measurements, the probability that one gave the correct metadata is at least $1-(3/4)^c$.
\end{proof}
\cref{fig:plots} shows that in simulation, a general machine built with the above procedure would actually behave like a Las Vegas sampler. 
Our software tool\footnote{\url{https://github.com/leoorshansky/reversible_sim}} can simulate the physical embedding of our Las Vegas and Monte Carlo generation machine constructions based on an arbitrary input reversible Turing machine specification.
\section{Monte Carlo Construction}\label{sec-MC}
Recall that a Las Vegas sampler lets us sample the exact probability distribution we desire, at the cost of having to reject some outputs if they are measured at the ``wrong time.''
In contrast, a Monte Carlo sampler lets us produce a sample from every measurement, 
but we only approximate independent samples from the desired distribution $\calD$.

\begin{definition}
    Let $M=(\Sigma, f)$ be a general machine where each configuration $(w,m,o) \in \Sigma$ has metadata $m \in \{0,1\}$. We say that the physical embedding of $M$ is a \emph{$(T,\epsilon)$-efficient Monte Carlo sampler} for a distribution $\calD$ if for any $c > 0$ the next output of the following procedure, conditioned on any choice of previous outputs, comes from a distribution $\calD'$ such that the total variation distance between $\calD$ and $\calD'$ is at most $\epsilon^{c}$:
    \begin{quoting}[indentfirst=false]
        Wait $cT$ time units and measure the metadata $m$ of the current configuration.
        Then if $m = 0$ record the first value stored in the output register.
        If $m=1$ instead record the second value stored in the output register.
    \end{quoting}
\end{definition}

Unlike in the Las Vegas case, a Monte Carlo sampler will always have two values in its output register---one of which will contain a sample from $\calD'$.
Reading the metadata lets the observer pick which part of the output register it should read before producing its sample.
Unlike with the Las Vegas sampler, there is no guarantee that samples generated by the Monte Carlo sampler will be independent of one another.
Thus a Monte Carlo sampler will exchange sampling exactly from distribution $\calD$ for the guarantee that each measurement will produce a sample.

\subsection{The general machine}
In this section we describe a general machine to make an efficient Monte Carlo sampler.
In \cref{sec:concrete} we give a more explicit version of this high-level construction assuming that the starting machine is a Turing machine.

Similar to the Las Vegas construction, we will assume that we are given a reversible machine $M$ that starts in states where an $r$ bit string serves as an input for determining the sample from distribution $\calD$.
As stated before, the state-space of $M$ is composed of $2^r$ chains representing the computation on each input.
We will think of these chains as having the input on the left and the terminal state on the right.
At a high level, our machine $M^*$ will be composed of two sub-machines $M^*_1$ and $M^*_{-1}$ so that whenever $M^*_1$ takes a step forwards, $M^*_{-1}$ takes a step backwards.
Thus by carefully designing $M^*_{1}$ and $M^*_{-1}$ we can guarantee that one of the sub-machines always has a sample from $\calD'$.

\subparagraph*{The output holding region}
\begin{wrapfigure}{R}{0.3\textwidth}
    \centering
    \includegraphics[width=.3\textwidth]{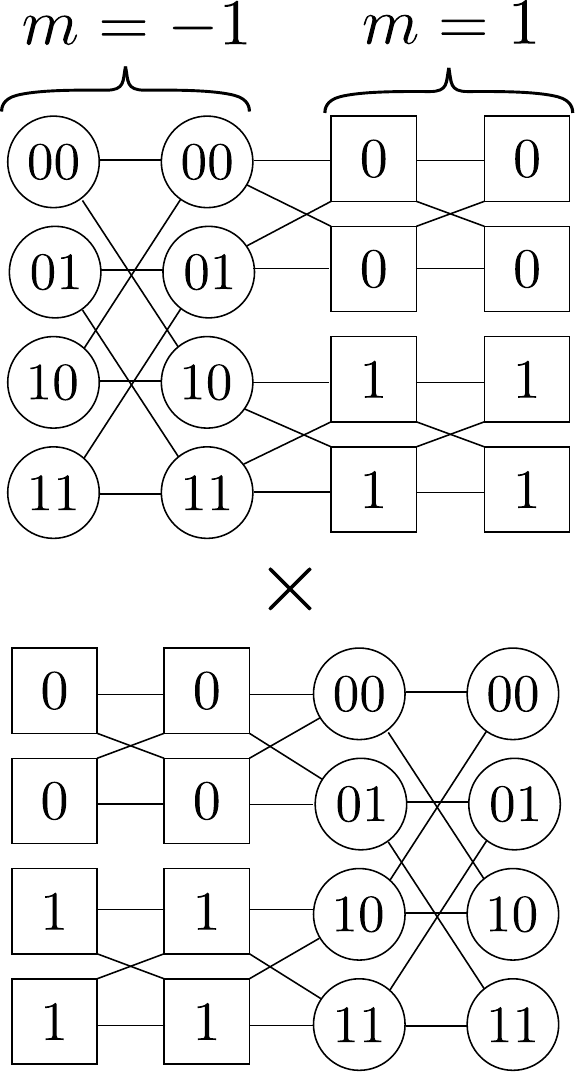}
    \caption{Our Monte Carlo construction applied to a general machine with no transition rules and a single bit of input. Each node in the graph is a pair of nodes in the same column of the sub-machines.}
    \label{fig:MC-simple}
    \vspace{-40px}
\end{wrapfigure}
Similar to our Las Vegas construction, we will modify $M$ by artificially elongating the length of shorter chains so that they all have the same length $T$ by adding nodes with the same value in the output register to the end of each chain.
Then we add $T+r+1$ additional nodes to each chain that maintain the output value and call these nodes the output holding region.
After these changes, we call the machine $M'$.

\subparagraph*{The randomizer}
We use the same randomizer as in the Las Vegas construction to enable the machine to change its inputs.
However, unlike in the Las Vegas construction, we only link two copies of $M'$ to the right hand side of the randomizer.
We again add the edges between layers for the two identical copies of $M'$ so that each node in the graph that is not on the leftmost or rightmost layer has two edges to the left and two edges to the right.
We assign meta-data value $1$ to the nodes in the output holding regions of the chains and $0$ everywhere else.
This gives us a $2T+2r+2$-layered machine where a random walk over the machine behaves like an unbiased random walk over the layers.
The machine ``resets'' its randomness whenever the random walk reaches the leftmost layer, enabling the next measurement to be independent of the previous one.
We will call this machine~$M^*_1$.

\subparagraph*{The sub-machines}
Machine $M^*_1$ is a $2T+2r+2$-layered general machine where the nodes with meta-data value zero contain a sample that is ready to read.
We will define machine $M^*_{-1}$ to have the same state-space as $M^*_1$ except that the output holding regions have meta-data value $-1$ instead of $1$.
Our machine $M^*$ will be created by joining nodes of $M^*_1$ and $M^*_{-1}$.
Specifically for each node $v^1_i$ in layer $i$ of $M^*_0$ with configuration $(w_1,m_1,o_1)$ and each node $v^{-1}_j$ in layer $T+r-i$ with configuration $(w_{-1}, m_{-1}, o_{-1})$ of $M^*_1$
we create a single node $v_{i,j}$ in $M^*$ whose registers are in the configuration $((w_1,w_{-1}), (m_{1} + m_{-1}), (o_1,o_{-1}))$.
We then add edges between nodes $v_{i,j}$ and $v_{i',j'}$ exactly when $v^{1}_i$ has an edge to $v^1_{i'}$ and $v^{-1}_j$ has an edge to $v^{-1}_{j'}$.
Note that the nodes of $M^*$ always have meta-data value $-1$ or $1$ and so there is only a single bit of meta-data.
That bit directly corresponds to which machine is in an output holding configuration.
An example of this construction is in \cref{fig:MC-simple} and the general version in \cref{fig:MC-general}.

\subsection*{The sampling procedure}
\begin{theorem}\label{lem:gen-mc-sampler}
    If the original machine had chain length at most $T$, then the above procedure gives an adiabatic $(\Theta(T^2+r^2),1/2)$-efficient Monte Carlo sampler.
\end{theorem}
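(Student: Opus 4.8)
The plan is to mirror the structure of the Las Vegas argument (\cref{lem:gen-lv-sampler}) but to upgrade its layer-marginal estimate to a genuine mixing-time bound, since a Monte Carlo sample depends on the \emph{full} configuration rather than just the metadata. First I would record the structural facts the construction guarantees. By construction $M^*$ is a layered graph in which every internal node has two forward and two backward edges, so its physical embedding evolves as an unbiased continuous-time random walk and, by the corollary on adiabatic evolution, converges to the uniform stationary distribution; in particular $M^*$ is adiabatic. The layer index is moreover a lumpable coordinate, so projecting onto layers yields an unbiased one-dimensional walk on a path of $\ell = 2T+2r+2 = \Theta(T+r)$ vertices. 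I would then verify the key invariant: because $M^*_1$ and $M^*_{-1}$ are glued along reflected layers, their output-holding regions (each the rightmost $T+r+1$ layers) are complementary, so \emph{exactly one} sub-machine occupies a holding region at every configuration. Hence the metadata is always $\pm 1$, and it correctly selects the output coordinate storing the valid sample $g(b)$, where $b$ is the chain label of the holding sub-machine.

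Next I would prove correctness at stationarity: if the walk occupies a uniformly random configuration, the recorded output is distributed \emph{exactly} as $\calD$. The arithmetic that makes this work is that every layer contains exactly $2^{r+1}$ nodes --- the randomizer columns hold $2^{r+1}$ nodes, and each computation/holding layer holds $2\cdot 2^r = 2^{r+1}$ nodes (two copies, $2^r$ chains). Since the reflected layer paired with any holding node always has the same size $2^{r+1}$, the uniform distribution over configurations of $M^*$ induces the uniform distribution over the holding sub-machine's nodes, so its chain label $b$ is uniform over $\{0,1\}^r$ and $g(b) \sim \calD$. Averaging over the two equiprobable metadata values then yields the recorded output as an exact $\calD$-sample.

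The remaining and hardest step is the mixing bound: starting from \emph{any} configuration --- equivalently, conditioned on any history of previous outputs, which merely fixes a starting distribution --- after time $c\cdot\Theta(T^2+r^2)$ the configuration distribution is within total variation distance $(1/2)^c$ of uniform. I would establish this by coupling two copies of the walk. The layer coordinate is the unbiased walk on a path of length $\ell=\Theta(T+r)$, whose coupling time is $\Theta(\ell^2)=\Theta(T^2+r^2)$; the within-layer coordinate, in particular the input $b$, is coalesced whenever the walk traverses the randomizer, which produces a uniform input by \cref{lem:randomizer-independent-samples}. Since a full traversal of the line necessarily passes through the randomizer and occurs with constant probability in $\Theta(\ell^2)$ time, both coordinates couple within one mixing time with constant probability, and the geometric tail bounds the coalescence-failure probability by $(1/2)^c$ after $c$ mixing times. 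Because the recorded output is a deterministic function of the configuration, total variation distance cannot increase, so $d_{\mathrm{TV}}(\calD',\calD)\le(1/2)^c$, establishing $(\Theta(T^2+r^2),1/2)$-efficiency.

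The main obstacle is exactly this last step, and it is what genuinely separates the Monte Carlo analysis from the Las Vegas one. There the layer \emph{marginal} placing density $\ge 1/4$ on the holding layers sufficed, and that follows from the one-dimensional walk alone; here a correct sample additionally requires the input coordinate to be fresh, so the naive one-dimensional argument is not enough. The technical heart is therefore showing that the randomizer's bit-shuffling coalesces the input \emph{within} the $\Theta((T+r)^2)$ line-traversal time rather than inflating the mixing time. I would resolve this by observing that the randomizer contributes only $\Theta(r)$ to the line length and that a single left-to-right crossing already yields a uniform input (\cref{lem:randomizer-independent-samples}), so the input coordinate is never the bottleneck and one round trip to the leftmost layer both re-randomizes $b$ and returns the walk to a holding region with the correct marginal.
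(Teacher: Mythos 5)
Your proposal is correct, and while it runs on the same probabilistic engine as the paper's proof, it packages the argument quite differently. Both arguments rest on the facts that the layered structure lumps the walk to an unbiased one-dimensional walk on $\Theta(T+r)$ layers, that visits to the two endpoints refresh the sub-machines' inputs via \cref{lem:randomizer-independent-samples}, and that the walk hits both endpoints within $\Theta(T^2+r^2)$ time with probability at least $1/2$, boosted to $1-(1/2)^c$ over $c$ windows by the Markov property. The paper stops essentially there: it conditions directly on the both-endpoints event, asserts that on this event the measured output is an exact sample of $\calD$ independent of all previous observations, and bounds the total variation distance by the failure probability $(1/2)^c$; the stationary distribution never appears in its proof. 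You instead route through stationarity: an exactness-at-stationarity lemma (the uniform distribution over configurations yields an output exactly distributed as $\calD$, using the $2^{r+1}$-nodes-per-layer count and the complementary holding-region invariant), a coupling whose coalescence is driven by the same endpoint visits, and the data-processing inequality to push the bound through the output map. Your route uses heavier machinery, but it makes explicit two things the paper's four-sentence proof glosses over: how conditioning on the history of previous outputs is handled (by the Markov property plus uniformity of the mixing bound over starting configurations), and why ``independent of what was observed during prior measurements'' is legitimate (full coalescence of the coupled walks, which requires coupling the randomizer bit choices so that both sub-machines' inputs, copy indices, and within-randomizer positions agree---not just the layer coordinate; note that the walk must traverse \emph{both} sub-machines' randomizers, the one at each end of the line, though a single end-to-end traversal accomplishes this). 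The paper's approach buys brevity and never needs the stationary distribution or the equal-layer-size count; yours buys a strictly stronger intermediate statement (the configuration distribution itself is $(1/2)^c$-close to uniform) and a more airtight treatment of the conditioning, at the cost of having to fill in the coupling details you currently only sketch.
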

\begin{proof}
    After time $\Theta(T^2 + r^2)$ there is an $1/2$ probability that a random walk on a line of length $2T+2r+2$ has hit both the leftmost and rightmost nodes.
    Thus time $\Theta(T^2 + r^2)$ is sufficient so that there is a $1/2$ probability that a random walk on the state space of our Monte Carlo sampler has hit a node on the leftmost and rightmost layers.
    For any $c>0$ after waiting time $\Theta(cT^2 + cr^2)$ the probability that this has happened is at least $1-1/2^c$.
    After any random walk over the state space has done both of these, the sample observed by measuring the machine will come from the distribution $\calD$ as it will be independent of what was observed during prior measurements.
    Therefore the total variation distance between $\calD$ and the samples given by this machine when waiting time $\Theta(cT^2 + cr^2)$ is at most $1/2^c$ and this is an $(\Theta(T^2 + r^2), 1/2)$-efficient Monte Carlo sampler.
\end{proof}
\cref{fig:plots} shows that in simulation, this construction behaves as expected. 
\begin{figure}
    \centering
    \includegraphics[width=.45\textwidth]{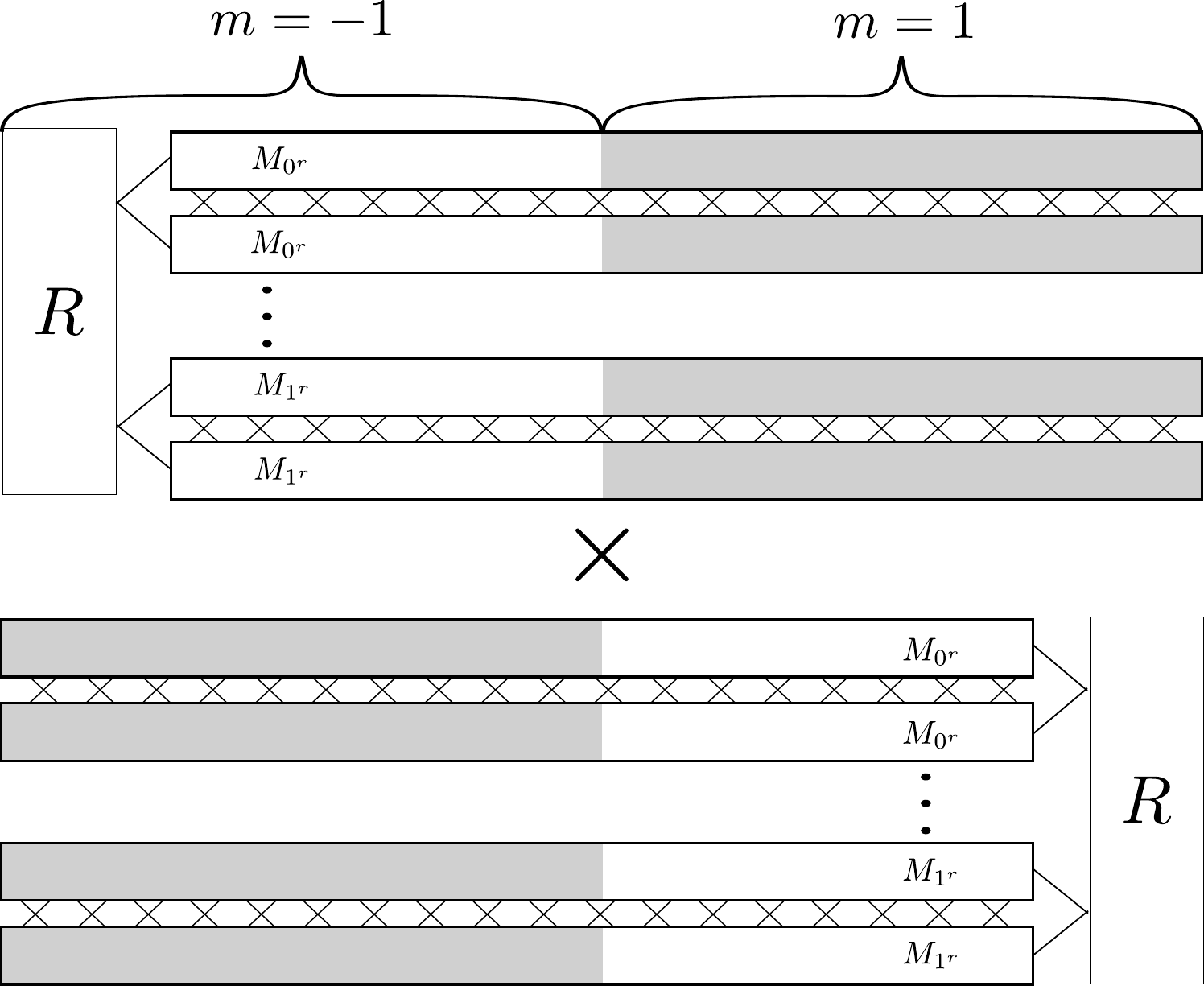}
    \caption{Monte Carlo: Each node in the graph is a pair of nodes from the sub-machines.}
    \label{fig:MC-general}
\end{figure}
\begin{figure}[b]
    \centering
    \includegraphics[width=.85\textwidth]{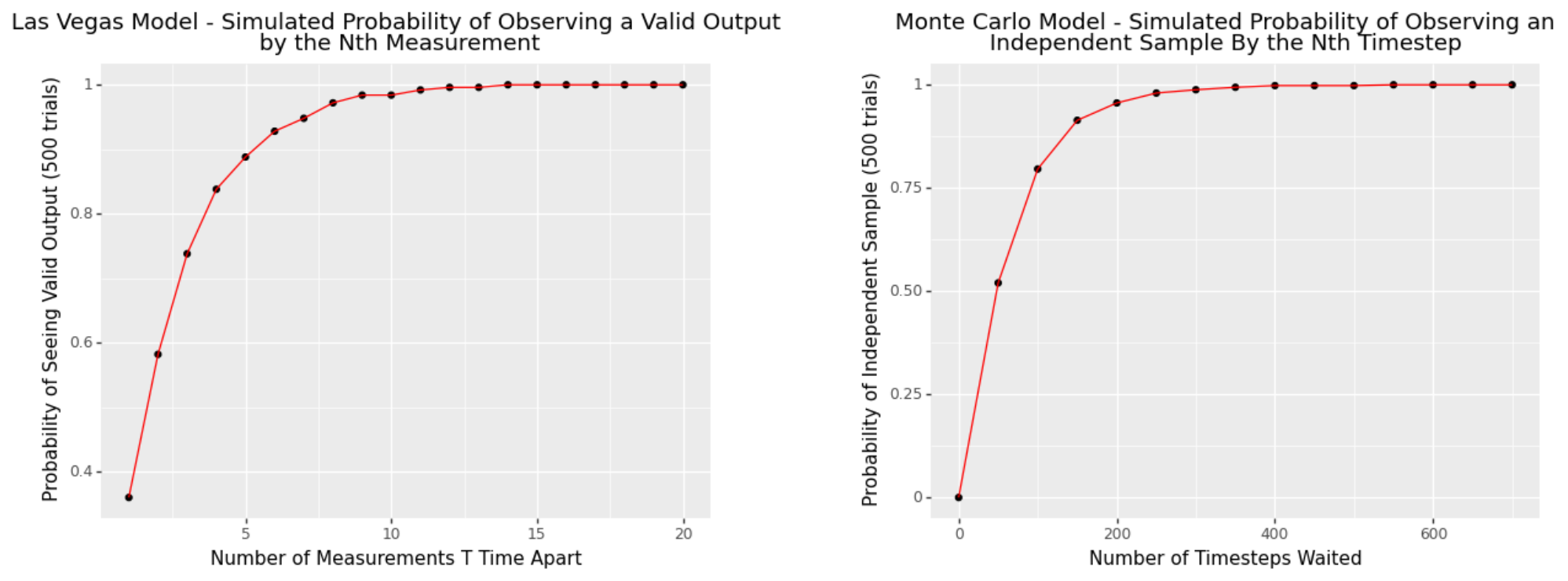}
    \caption{
    Empirical plots built by simulating our general construction on a reversible Turing machine. Las Vegas: the probability that an observer has found a new sample by each attempt.
    Monte Carlo: the probability that a sample is independent of the previous one over time.
    }
    \label{fig:plots}
\end{figure}
\section{Concrete Turing Machine Construction}
\label{sec:concrete}

In \cref{sec-LV} and \cref{sec-MC} we gave high level constructions of  adiabatic machines that require zero energy to run and give Las Vegas and Monte Carlo samples respectively from a desired distribution.
In this section, we make these constructions more explicit.
Specifically we will show how to modify Turing machine $M$ that maps a random seed in $\{0,1\}^r$ as input on its tape to a multi-tape Turing machine $M^*$ that can sample from the same distribution in either a Las Vegas or Monte Carlo fashion.
The general strategy directly follows from our results in \cref{sec-LV} and \cref{sec-MC}.
\begin{definition}
    An $m$ tape \emph{Turing machine} $M$ is a tuple $(S, \Sigma, f)$ where $S$ is a set of configurations and $\Sigma$ is a set of symbols for the tape.
    $f$ is a set of rewrite rules of the form $(s, \sigma) \to (s', \sigma')$ or $s \to (s', d)$, where $s, s' \in S$, $\sigma, \sigma' \in \Sigma^m$ and $d \in \{L,\emptyset,R\}^m$.
    
    In each timestep, the machine reads the values on the tape at each of its heads and its configuration and then applies some rewrite rule in $f$ that matches this configuration. Rewrite rules of the form $(s, \sigma) \to (s',\sigma')$ change the configuration and the values at the current locations on the tapes. Rules of the form $s \to (s',d)$ change the configuration and cause the $i$'th tape head to move one space to the left when $d_i=L$, one space to the right when $d_i=R$, or stay in the same place when $d_i= \emptyset$.
\end{definition}
\paragraph*{Our starting machine}
We will make a couple of simplifying assumptions about the machine $M$ that will make the proofs easier to follow; however, they are not strictly necessary for our argument to work.
\begin{enumerate}
    \item $M$ has an input tape and an output tape and starts with the input seed $b \in \{0,1\}^r$ written on its input tape with blank characters on either end of the input.
    \item $M$ is a reversible Turing machine. This can be achieved by applying Bennett's reversible pebble
    game construction~\cite{Ben89} to $M$.
    \item $M$ starts in a state $A$
    with the head at the far-right of the input tape (looking at the first blank). The output tape starts empty.
    \item $M$ produces the output in time at most $T$, which is known to this construction. We can get the same results for a slightly different construction if $T$ is not known ahead of time, but instead we have the guarantee that all paths in $M$ have the same length.\footnote{Moreover, a slight modification of the construction carries through even if we merely have the guarantee that $M$ halts on all inputs (i.e. if we don't know $T$), however in that case we loose the probability bounds in \cref{thm:LV-TM}. See \cref{sec:discussion} for further discussion.}
\end{enumerate}

\subsection{Las Vegas Sampling}
\begin{theorem}\label{thm:LV-TM}
Let $M = (S, \Sigma, f)$ be a (reversible) Turing machine that---when given a uniformly random bitstring over $\{0,1\}^r$ as input---produces a sample from a distribution $\calD$ in time at most $T$.
Then there is a Turing machine $M^*$ that is an adiabatic $(\Theta(T^2+r^2),3/4)$-efficient Las Vegas sampler of $\calD$.
\end{theorem}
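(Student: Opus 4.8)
The plan is not to re-derive the efficiency bounds from scratch, but to reduce \cref{thm:LV-TM} to the general-machine result \cref{lem:gen-lv-sampler}. Concretely, I would exhibit an explicit multi-tape Turing machine $M^*$ and argue that its configuration-space graph, together with the metadata labeling of its states, is \emph{isomorphic} to the graph of the general machine $M^*$ built in \cref{sec-LV}. Since the adiabatic machine and hence its physical embedding depend only on this undirected labeled graph, and \cref{lem:gen-lv-sampler} already certifies that this graph yields an adiabatic $(\Theta(T^2+r^2),3/4)$-efficient Las Vegas sampler, the theorem follows once the isomorphism and the \emph{locality} of each transition are verified.

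First I would fix the tape layout and head state of $M^*$: the random-seed/work tape (initially the $r+1$-bit string manipulated by the randomizer, afterward serving as the input to $M$), the output tape of $M$, a bounded counter recording the current layer (which column of the randomizer, which of the $T$ computation steps, or which of the $2T+r+1$ output-holding positions we occupy), a single \emph{copy} bit implementing the two-chain doubling, and the metadata value $m \in \{-1,0,1\}$ carried in the head state. Each region then receives its rewrite rules. The randomizer is realized as the $(r+1)$-dimensional butterfly of \cref{lem:randomizer-independent-samples}: at column $i$ the rules either leave the seed unchanged or flip bit $i$ while advancing the column counter, so each internal randomizer configuration has exactly two forward and two backward neighbors. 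The computation region applies the reversible rewrite rules of $M$ for $T$ steps, padded so every chain has length exactly $T$; being reversible and deterministic, $M$ contributes a simple chain per seed. The output-holding region is a counter that increments while the output tape is held fixed. Finally, the doubling and linking are implemented by giving each forward step two rules---one preserving the copy bit and one flipping it---so that step $t$ of each copy connects to step $t+1$ of both copies, exactly as in \cref{fig:LV-general}.

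I would then label metadata exactly as in the general construction: holding-region configurations on the left of the randomizer get $m=-1$, the mirrored ones on the right get $m=+1$, and all others get $m=0$. The core of the argument is checking that the map sending each $M^*$ configuration to its corresponding general-machine node is a bijection preserving edges and metadata, so that the two adiabatic graphs are isomorphic; in particular every internal node has in-degree and out-degree exactly two on each side, reducing the walk to a one-dimensional unbiased walk over the $6T+3r+3$ layers. I would also verify that each rewrite rule changes only a bounded region of the configuration (one tape cell, a unit counter step, the copy bit, or the head state), so that $M^*$ is a genuine local model rather than merely the abstract general machine. Granting the isomorphism, \cref{lem:gen-lv-sampler} immediately yields the claimed $(\Theta(T^2+r^2),3/4)$-efficiency.

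The main obstacle is not the probability analysis---which is inherited wholesale from \cref{lem:gen-lv-sampler}---but the bookkeeping needed to realize the \emph{non-chain} parts of the graph (the butterfly randomizer and the degree-two linking) with purely local rules while certifying that no spurious edges or configurations are introduced. The delicate points are ensuring that the phase/counter register cleanly partitions configurations into layers so distinct regions never collide, that the two flip-or-keep rules at each step produce precisely the two intended neighbors and no more, and that the boundary configurations joining the randomizer, computation, and holding regions carry the correct degrees, so the global structure really is the $(6T+3r+3)$-layer line underlying the $\Theta(T^2+r^2)$ hitting-time bound.
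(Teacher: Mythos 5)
Your overall strategy---build an explicit multi-tape Turing machine whose configuration graph, together with its metadata labeling, realizes the general Las Vegas construction, and then invoke \cref{lem:gen-lv-sampler}---is exactly the strategy of the paper's proof, and most of your components track it closely: the nondeterministic flip-or-keep rules for the randomizer, counters to pad all chains to a common length and then to create the output holding region, a copy bit with preserve/flip rules to realize the degree-two doubling, and metadata read off the finite control. (One mild caveat you share with the paper: the counter increments take a variable number of sub-steps, so the Turing machine graph is only isomorphic to the general construction up to such stretching; the paper notes this only lengthens the holding regions and thus only helps.)

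However, there is one genuine gap: you never explain how the chains on the \emph{left} side of the randomizer are realized by local Turing machine rules, even though your metadata labeling ($m=-1$ on left holding regions, $m=+1$ on right ones) presupposes that both sides exist. This is not routine bookkeeping. The randomizer sweeps the head across the seed in one direction, so configurations at its right boundary have the head at the far right of the input---matching the assumed start configuration of $M$---while configurations at its left boundary have the head at the far \emph{left}. A verbatim copy of $M$'s rules attached there would run with the input on the wrong side of the head, so the left chains would neither compute samples from $\calD$ nor necessarily have the intended lengths; the claimed isomorphism to the general machine fails, and with it the independence guarantee that comes from forcing the walk to re-traverse the randomizer between an $m=-1$ observation and an $m=+1$ observation. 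The paper resolves this with a \emph{chiral inversion} $\overline{M_1}$: a second copy of the padded machine in which every head movement on every tape except the counting and output tapes is reversed, so that, started with its head at the left end of a mirrored input, it performs the mirror image of $M$'s computation; since the reversal of a uniformly random string is uniformly random, its output is still distributed as $\calD$. Some such device (chiral inversion, or alternatively an explicit reversible head-sweep phase inserted before a left-side copy of $M$, with layer counts adjusted accordingly) is required to complete your isomorphism claim; as written, your construction only instantiates one side of the two-sided structure that \cref{lem:gen-lv-sampler} needs.
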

\begin{proof}
We will convert machine $M$ to machine $M^*$ by defining a sequence of hybrid machines until the state space of our machine matches the general construction shown in \cref{fig:LV-general}.
Let $r$ be the length of the random input to $M$ and $T$ be an upper bound on the number of steps that $M$ takes to produce its output on any input.
Let $T_{\text{comp}}$ be the smallest power of two that is at least as large as $T$ and $T_{\text{out}}$ be the smallest power of two that is at least as large as $2T_{\text{comp}}+r+1$.
Note that $T_{\text{out}}$ and $T_{\text{comp}}$ are both $O(T+r)$.
We will start by extending $M$ to machine $M_1$ by adding the output holding region.
In other words, we will artificially manipulate the number of steps so that our machine takes the same number of steps for all input bitstrings and then further extend the length of the computation so that the machine is likely to observe the output when sampled.

We will do this by augmenting the machine with two new tapes designed to contain reversible counters.
We will refer to these tapes as the computation-counting and output-counting tapes.
The computation-counting tape starts with the value $0$ expressed as a $\log_2 T_{\text{comp}}$ bit binary number (surrounded by $\_$ symbols) with the least significant bit on the right and the head starts to the right of this number.
The output-counting tape starts with the value zero expressed as a $\log_2 T_{\text{out}}$ bit binary number with the head to the left of the number.
For each rewrite rule in $f$ that changes the state of $M$ to to some state $A$, we change the target state of that rule to a new state $\alpha_A$.
We additionally add new rewrite rules to $f$ for each configuration that was terminal in $M$ to map that state to a new state $\alpha_A$.
We add the following auxiliary states and rules that only act on the computation-counting tape:
\begin{align*}
\alpha_A &\to (\beta_A, L)&
\gamma_A &\to (\xi_A, R)\\
(\beta_A, 1) &\to (\alpha_A, 0)&
(\xi_A,0) &\to (\gamma_A, 0)\\
(\beta_A, 0) &\to (\gamma_A, 1)&
(\xi_A,\_) &\to (A,\_)\\
(\beta_A, \_) &\to (\omega, \_)
\end{align*}
Adding the above rules creates a (reversible) counter in the computation-counting tape that will increment after simulating each step of $M$.
When $M$ would have halted, the additional rules cause the machine to instead increment the counter.
These new rules give us a machine where a state is terminal iff the state is $\omega$ and the computation always runs for the same number of steps regardless of the input.
Now we can add more rules that transition out of $\omega$ and use the output-counting tape to produce our output holding region.
Specifically the new rules for this act only on the output-counting tape.
\begin{align*}
    (\omega, \_) &\to (\gamma, \_)&
    \gamma &\to (\xi,R)\\
    (\xi,0) &\to (\gamma,0)&
    (\xi,\_) &\to (\chi, \_)\\
    (\chi,\_) &\to (\alpha, \_)&
    (\beta,1) &\to (\alpha, 0)\\
    \alpha &\to (\beta, L)&
    (\beta,0) &\to (\gamma, 1)\\
    (\beta,\_) &\to \perp
\end{align*}
Where $\perp$ is a new halting configuration for the machine.
We add $\chi$ as an additional state this counter goes through so that the number of steps between each time this counter is incremented is the same as the number of steps for the computation-counting tape.
If instead of knowing the value of $T$ when designing this machine we have the guarantee that all chains in $M$ have the same length, we can instead design the counters so that we first count up to $T$ on an initially blank computation-counting tape as we perform our computation and then we count up to $2T+r$ on an initially blank output-counting tape.
This requires that all chains of $M$ have the same length to sample from the desired distribution, but removes the requirement of knowing $T$ ahead of time.

Together, these additional states and modified transitions give us a reversible machine $M_1$ where all computation paths take the same number of steps.
The state-space of the machine contains sufficient output-holding states for our general Las Vegas construction.
Let $\overline{M_1}$ be the \emph{chiral inversion} of $M_1$.
Specifically for each transition rule of $M_1$ that moves the heads of the Turing machine $\overline{M_1}$ moves the heads of all the tapes except for the computation-counting, output-counting, and output tapes in the opposite direction.
We note that $M_1$ behaves the same as $\overline{M_1}$ if the starting configurations of all other tapes is inverted around the heads.
We will have two copies each of $M_1$ and $\overline{M_1}$ denoted as $M_1^{0}, M_1^{1}, \overline{M_1^0},$ and $\overline{M_1^1}$.
Each state $A$ will be denoted $A^0, A^1, \overline{A^0}$ and $\overline{A^1}$ for these respective machines.

We will start constructing $M^*$ by combining the states and transition rules of $M^0_1, M^1_1, \overline{M^0_1}$ and $\overline{M^1_1}$.
Now to connect the states of the four machines, we will implement the randomizer from our general machine construction.
We add transition rules that will randomize the input before leading to the starting states of these machines (lets consider this state in $M_1$ to be $A$).
More specifically we add the following new states and transition rules that act only on the input tape:
\begin{align*}
    (\alpha,0) &\to (\beta, 0)&
    (\alpha,0) &\to (\beta, 1)\\
    (\alpha,1) &\to (\beta, 0)&
    (\alpha,1) &\to (\beta, 1)\\
    (\alpha,\_)&\to (A^0,\_)&
    (\alpha,\_)&\to (A^1,\_)\\
    (\overline{A^0}, \_) &\to (\beta, \_)&
    (\overline{A^1}, \_) &\to (\beta, \_)\\
    \beta &\to (\alpha, R)
\end{align*}
These states implement the randomizer in the general machine Las Vegas construction with the states of $M_1^0$ and $M_1^1$ acting on the (forward) input like the states right of the randomizer while the states of $\overline{M_1^0}$ and $\overline{M_1^1}$ act on the (backwards) input like the states left of the randomizer.
All that remains to emulate the genearal machine construction is to allow the states in $M_1^0$ / $M_1^1$ and $\overline{M_1^0}$ / $\overline{M_1^1}$ to transition to one another.
We can achieve this by taking each transition rule from each of these machines and making a copy of it that maps to the equivalent state in the other machine.
For example if $M_1$ had the transition rule $A \to (B,R^m)$ then we would add the four rules:
\begin{align*}
    A^0 &\to (B^1,R^m)&
    A^1 &\to (B^0,R^m)\\
    \overline{A^0} &\to (\overline{B^1},L^m)&
    \overline{A^1} &\to (\overline{B^1},L^m)
\end{align*}

This machine $M^*$ has the same state-space as the general machine for our Las Vegas construction.\footnote{The output holding regions may be longer than in the general machine construction, but this only improves the probability that each measurement gives a new sample.}
Specifically we can view $M^*$ as a general machine where the tapes, head locations, and state are in the work registers, the output register is the output tape.
The metadata is set to $-1$ (or $1$) when the machine is in states corresponding to incrementing the output-counting tape in the chirally inverted (non-inverted) machines and is otherwise set to~$0$.
By \cref{lem:gen-lv-sampler}, this makes $M^*$ an $(\Theta(T^2+r^2),3/4)$-efficient Las Vegas sampler.
\end{proof}

\subsection{Monte Carlo Sampling}
\begin{theorem}\label{thm:MC-TM}
Let $M = (S, \Sigma, f)$ be a (reversible) Turing machine that---when given a uniformly random bitstring over $\{0,1\}^r$ as input---produces a sample from a distribution $\calD$ in time at most $T$.
Then we can construct a Turing machine $M^*$ that is an adiabatic $(\Theta(T^2+r^2),1/2)$-efficient Monte Carlo sampler of $\calD$.
\end{theorem}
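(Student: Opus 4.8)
The plan is to mirror the proof of \cref{thm:LV-TM} as closely as possible, reusing essentially all of its Turing-machine machinery and replacing only the final ``linking'' step by the reflected-product construction of \cref{sec-MC}. First I would build, exactly as in \cref{thm:LV-TM}, a reversible single-copy machine with the computation-counting and output-counting tapes that (i) pads every computation path of $M$ to a common length and (ii) appends an output-holding region in which the output tape is frozen. I would then attach the same butterfly randomizer on the input tape, but---following the Monte Carlo recipe---link only the two forward copies to the right of the randomizer and assign metadata $1$ to the output-holding states and $0$ elsewhere. Call the resulting layered reversible machine $M^*_1$; by construction its state space is the $2T+2r+2$-layered graph of \cref{sec-MC}, so that a random walk on it behaves like a one-dimensional walk on its layers.

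The new ingredient is the reflected product. I would form $M^*_{-1}$ as a second copy of $M^*_1$ whose output-holding states carry metadata $-1$ instead of $1$, and then define $M^*$ to simulate $M^*_1$ and $M^*_{-1}$ in lockstep, with $M^*_{-1}$ run \emph{in reverse}. Concretely, $M^*$ carries the tapes of both sub-machines, and a single transition performs one forward micro-step of $M^*_1$ together with one backward micro-step of $M^*_{-1}$; because every rule added above is reversible, a backward step is just the application of the corresponding inverse rule, so $M^*$ is itself reversible and deterministic in both directions. The single metadata bit of $M^*$ is the sum of the two sub-machines' metadata values, which records which of the two currently sits in its output-holding region; the observer reads the first output tape when this bit indicates $M^*_1$ and the second when it indicates $M^*_{-1}$. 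This realizes at the Turing-machine level the node-pairing of \cref{fig:MC-general}, in which the node at layer $i$ of $M^*_1$ is glued to the node at the reflected layer of $M^*_{-1}$.

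The step I expect to be the main obstacle is arranging the reflection so that the single metadata bit is \emph{always} meaningful. Pairing layer $i$ with its reflected layer forces exactly one of the two paired nodes to lie in an output-holding region precisely when that region is exactly half the total layer length: if it is too short there are layers with no valid output, and if it is too long there are layers where both sub-machines sit in their holding regions and the sum-of-metadata convention becomes ambiguous. I would therefore choose the counter bounds (and the initial offset of the reversed copy) so that the computation-plus-padding prefix and the output-holding suffix of each sub-machine have exactly equal micro-step length, and verify that the lockstep interleaving of micro-steps preserves this balance layer by layer. This is more delicate than in the Las Vegas case, where a longer output-holding region was harmless (see the footnote in \cref{thm:LV-TM}); here the two lengths must match exactly.

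Finally, having checked that $M^*$ is reversible, that every internal layer has in- and out-degree two in each direction so the induced dynamics is an unbiased one-dimensional walk on $2T+2r+2$ layers, and that the metadata bit always identifies a sub-machine holding an independent sample once the walk has touched both ends, I would invoke \cref{lem:gen-mc-sampler} to conclude that $M^*$ is an adiabatic $(\Theta(T^2+r^2),1/2)$-efficient Monte Carlo sampler of $\calD$.
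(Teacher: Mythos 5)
Your proposal is correct and follows essentially the same route as the paper's proof: reuse the Las Vegas counter-tape machinery and randomizer to build one layered sub-machine, then form the product machine that runs one copy forward and a second copy backward in lockstep, with the single metadata bit indicating which copy sits in its output-holding region, and finish by invoking \cref{lem:gen-mc-sampler}. The "delicate" balance condition you flag is exactly the adjustment the paper makes—it sets $T_{\text{comp}}$ and $T_{\text{out}}$ to the same power of two exceeding $T+r$ and initializes the computation counter at $r+1$ (to absorb the randomizer's steps), so the prefix and output-holding suffix have equal length, just as you prescribe.
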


\begin{proof}
Let $M_1$ be the similarly named machine $M_1$ we generated in the proof of \cref{thm:LV-TM} except that we set $T_{\text{comp}}$ and $T_{\text{out}}$ to both be the smallest power of two larger than $T+r$ and initialize the computation-counting tape to be initialized with $r+1$ expressed in binary instead of zero.
This will make the length of the computation plus randomizer the same as the output holding region, which is essential for the functioning of our general construction.

We create two copies of $M'$ denoted as $M^0_1$ and $M^1_1$ where each state $A$ in $M'$ is denoted $A^0$ and $A^1$ for the respective machines.
We will start constructing $M_2$ by combining all states and transition rules in $M^0_1$ and $M^1_1$.
Now we connect the states of these two machines similarly to the Las Vegas construction by adding the same transition rules that will randomize the input. 
Let $A^0$ and $A^1$ be the starting configurations of their respective machines.
Then we add the following states and transitions:
\begin{align*}
    (\alpha, 0) &\to (\beta,0)&
    (\alpha, 0) &\to (\beta, 1)\\
    (\alpha, 1) &\to (\beta, 0)&
    (\alpha, 1) &\to (\beta, 1)\\
    (\alpha, \_) &\to (A^0, \_)&
    (\alpha, \_) &\to (A^1, \_)\\
    \beta &\to (\alpha, R)
\end{align*}
Similarly to the Las Vegas construction, we also add  transitions that allow the states from $M^0_1 / M^1_1$ to transition to one another during each step.

Now we have a Turing machine $M_2$ whose state-space graph looks like one of the sub-machines in our generic Monte Carlo construction.
All that remains is to combine two copies of machine $M_2$ so that one of them always has the output on its output tape.
Let $M^0_2 = (S^0, \Sigma, f^0)$ and $M^1_2 = (S^1, \Sigma, f^1)$ be two copies of the machine $M_2$.
We will build machine $M^* = (S^0\times S^1, \Sigma, f^*)$ that combines these machines.
This machine will have two copies of each tape in $M_2$ that are used for running $M^0_2$ and $M^1_2$.
For each pair of transition rules from $f^0$ and $f^1$, $f^*$ has a transition rule that implements the rule from $f^0$ forwards and the rule from $f^1$ backwards.\footnote{For example if these were one tape Turing machines, $f^0$ contained transition $(A^0, \alpha^0) \to (B^0, \beta^0)$, and $f^1$ contained the transition $(C^1, \gamma^1) \to (D^1, \delta^1)$ then $f^*$ would have the transition $((A^0, D^1), (\alpha^0, \delta^1)) \to ((B^0,C^1), (\beta^0,\gamma^1))$.}
$M^*$ would start in some initial configuration of the head and tapes for $M^0_2$ and some terminal configuration of the head and tapes for $M^1_2$.

The generated machine $M^*$ has a state space exactly like that of the general Monte Carlo sampler we described in \cref{sec-MC}.
We note that the meta-data register from the general machine construction can be simulated by reading if the state of $M^*$ corresponds to either sub-machine $M^0_2$ or $M^1_2$ being in a state where it is incrementing the output-counting tape, which can be expressed as a single bit.
The output can then be read by looking at the output tape for the appropriate sub-machine.
Thus by \cref{lem:gen-mc-sampler}, this makes $M^*$ an $(\Theta(T^2 + r^2), 1/2)$-efficient Monte Carlo sampler.
\end{proof}
\section{Discussions and Open Questions}\label{sec:discussion}
\subparagraph*{On the need for reversibility}
Our constructions start with the assumption that the machine $M$ is a logically reversible Turing machine.
Thanks to constructions like \cite{bennett1973logical}, we can safely make this assumption with only small overheads to the time and space complexities of $M$.
It is natural to ask whether instead we can start with irreversible computation, and adjust $\Delta G$ of states (see \Cref{append:thermodynamics}) so that we obtain an unbiased one-dimensional random walk over the layers. 
Unfortunately, as \cref{fig:irreversible-states} shows, a random walk over the layers will hit obstacles called \emph{traps}: states that are not reachable from any valid input by applying the irreversible transition function $f$.
As in general there can be exponentially many such traps in the length of the computation, it could take an exponential time for the machine to return to the randomizer.
This, in turn,
could result in an exponential increase in the time the observer needs to wait before measuring the device.

\begin{figure}
    \centering
    \includegraphics[width=.6\textwidth]{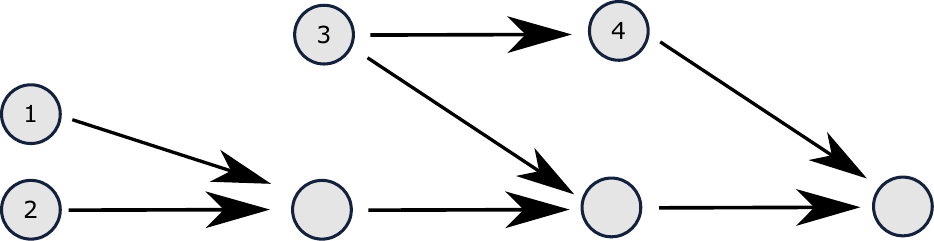}
    \caption{An irreversible computation. States 1, 2 are inputs while 3, 4 are \emph{traps} that do not come from an input.}
    \label{fig:irreversible-states}
\end{figure}

\subparagraph*{Unknown bounds on running time}
Our Las Vegas and Monte Carlo constructions make the assumption that for the input machine $M$, the quantity $T$ representing the maximum number of steps for $M$ to halt on any input of length $r$ is known to our construction.
While we are able to prove upper bounds on the running time for specific algorithms, as pointed out in \cite{norton2013brownian}, this is in general impossible for an arbitrary Turing machine $M$.
Fortunately there are two ways to modify our constructions for when $T$ is unknown.

First if we know that $M$ takes the same number of steps to run on all length $r$ inputs, then we can modify $M$ by adding a new tape that counts the number of steps during the computation.
The length of the output holding region can then be programmatically determined by performing arithmetic on this tape and then having it count down to make the output holding states.
If $M$ takes a different number of steps on different inputs then the probabilities of being in each output holding region may be different.
Thus this trick will give samples from a different distribution than desired.

Second if we are willing to wait $\Omega(4^r T^2)$ time between samples, we can artificially make all inputs take the same number of steps.
We do this by modifying $M$ to compute what it would output on all inputs and then output the appropriate value for its actual input.
Unfortunately this construction still has a problem.
Without knowing the value of $T$ it is unclear how long the observer should wait between samples.


\subparagraph*{Open questions} 
Our constructions make the simplifying assumption that a machine can be designed so that all transitions are adiabatic.
Any physical implementation of our constructions would likely involve some physical defects that would bias some transitions over others.
It would be interesting to analyse the behavior of our constructions under adversarial or random manipulation of the physical embedding.

Our constructions also require time proportionate to the square of the \emph{maximum} time an algorithm would take to produce a sample.
We think it would be interesting to either construct a machine where the time between samples only scales quadratically with the \emph{average} time to produce a sample or to prove that in general, such a machine cannot exist.
Finally it may be possible to prove that this time (whether maximum or average) cannot have sub-quadratic scaling for any reusable adiabatic sampler.

\bibliographystyle{plainurl}
\bibliography{sources}
 \appendix

\section{Thermodynamics and Kinetics}\label{append:thermodynamics}

\subparagraph*{Thermodynamic Model (Boltzmann Distribution).}
We can think of the state space of any physical system as a graph where each node represents a physical configuration (macro-state) of that system and edges indicate which configurations are reachable in one step. 
These configurations may contain different amounts of free energy. 
For a configuration $A$, we denote its Gibbs free energy as $\Delta G_A$.
Following the canonical Boltzmann distribution (also known as the Gibbs distribution) \cite{landau2013statistical}, we say that at equilibrium (when time goes to infinity) the probability of being in any configuration $A$ is a function of the Gibbs free energy of $A$:
$$ Pr_{eq}[A] = \frac{e^{-\Delta G_A / kT}}{\sum\limits_{B\in V} e^{-\Delta G_B / kT}}$$
Normalizing over the partition function ($\sum_{B\in V} e^{-\Delta G_B / kT}$) makes the equilibrium probability of a particular configuration $A$ proportional to how favorable $A$ is with respect to all other configurations.
The following is immediate:

\begin{proposition}\label{prop:uniform-dist}
If all configurations of a physical system have the same free energy, then all configurations have the same equilibrium probability and the Boltzmann distribution becomes a uniform distribution. 
\end{proposition}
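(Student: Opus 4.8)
The plan is to substitute the hypothesis directly into the Boltzmann distribution formula and simplify; since the statement is flagged as immediate, I expect this to reduce to a one-line computation exploiting the symmetry of the formula under equal energies. First I would fix notation: let $G$ denote the common Gibbs free energy, so that $\Delta G_A = G$ for every configuration $A \in V$. Because we work over the configuration space of a general machine, $V$ is finite, so the partition function $\sum_{B \in V} e^{-\Delta G_B / kT}$ is a finite sum and is well-defined and strictly positive.

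Next I would evaluate numerator and denominator separately. Under the hypothesis, the numerator $e^{-\Delta G_A / kT}$ equals $e^{-G/kT}$ and is therefore independent of the particular configuration $A$. The denominator is a sum of $|V|$ identical terms, so it collapses to $|V| \cdot e^{-G/kT}$. Dividing yields
$$ Pr_{eq}[A] = \frac{e^{-G/kT}}{|V| \cdot e^{-G/kT}} = \frac{1}{|V|}. $$
Since the right-hand side does not depend on $A$, every configuration carries equilibrium probability $1/|V|$, which is precisely the uniform distribution over $V$, establishing the claim.

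There is no genuine obstacle in this argument. The only point requiring a brief remark is the cancellation of $e^{-G/kT}$, which is legitimate because the exponential of any real number is strictly positive, so division is valid. The result is thus a direct consequence of the permutation symmetry the Boltzmann weights acquire once all the energies coincide, and it is exactly what is needed downstream to justify that an adiabatic machine—whose configurations all share the same free energy by construction—converges to a uniform stationary distribution, as asserted in the restated corollary on adiabatic evolution.
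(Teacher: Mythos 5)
Your proposal is correct and matches the paper's treatment: the paper declares the proposition ``immediate'' from the Boltzmann formula, and your substitution of the common energy $G$, collapse of the partition function to $|V|\cdot e^{-G/kT}$, and cancellation to obtain $Pr_{eq}[A] = 1/|V|$ is precisely that immediate computation spelled out. Nothing is missing and nothing differs in approach.
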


\subparagraph*{Kinetic Model (Metropolis Dynamics).}
Since we are interested in quantifying the efficiency of our constructions in terms of the time between samples, we need a kinetic model that is consistent with the above thermodynamic model. 
We consider the (continuous) time evolution of a general machine as a continuous-time Markov process over the graph that we call a physical embedding:


\begin{restatable}{definition}{embedding}
    Let $M = (\Sigma, f)$ be a general machine.
    A \emph{physical embedding} of $M$ is a directed graph $P_M = (V, E)$ with vertices $V = \Sigma$ and edges $E = \{(U,W) \mid W \in f(U) \text{ or } U \in f(W) \}$ (i.e., if $W$ is a successor configuration of $U$ then they are connected by forward and backward edges).  
    Each vertex $U$ is assigned a value $\Delta G_U \in \mathbb{R}$, and each edge $(U,W)$ is assigned the weight $\lambda_{U \to W}$ dictated by Metropolis dynamics~(see \Cref{eqn:metropolis} below).
\end{restatable}

Note that instead of Metropolis dynamics (defined below), our definition could have used any other kinetic rate law that is consistent with the Boltzmann distribution.\footnote{Our results will involve the setting where each configuration has the same $\Delta G$ and the system evolved via an unbiased random walk over configurations. 
As a result, our results do not depend on the particulars of the Metropolis dynamics.
}

Metropolis dynamics~\cite{metropolis1953equation} is a commonly used~\cite{schaeffer2015stochastic} kinetic law under which all favorable transitions occur at the same fixed rate, 
while the rate of unfavorable transitions scales according to the change in free energy. 
For two configurations $A$ and $B$, let $\lambda_{A\rightarrow B}$ be the rate parameter dictating the rate at which $A$ transitions to $B$.
Then metropolis dynamics corresponds to:
\begin{equation}
\label{eqn:metropolis}
\lambda_{A\rightarrow B} =
\begin{cases}
k \frac{e^{-\Delta G_B / kT}}{e^{-\Delta G_A / kT}} & \text{if } \Delta G_A < \Delta G_B\text{ (i.e., $A \rightarrow B$ is unfavorable)}\\
k & \text{otherwise}
\end{cases}
\end{equation}
It is known that Metropolis dynamics implies Boltzmann distribution thermodynamics, while satisfying the physically realistic assumption that all rates are bounded from above independent of the free energies~\cite{metropolis1953equation}.
\begin{proposition}\label{prop:random-walk}
    If all configurations of a physical system have the same free energy, then every transition occurs at the same rate and Metropolis dynamics becomes an unbiased random walk over the configurations.
\end{proposition}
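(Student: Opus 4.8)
The plan is to prove the proposition by a direct case analysis of the Metropolis rate law in \cref{eqn:metropolis} under the hypothesis that all free energies coincide, followed by the standard observation that a continuous-time Markov process with uniform outgoing rates induces an unbiased random walk.

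First I would fix the common free-energy value, writing $\Delta G_A = \Delta G_B = \Delta G$ for every pair of adjacent configurations $A$ and $B$. For any such edge I would determine which branch of \cref{eqn:metropolis} governs $\lambda_{A\rightarrow B}$. Since $\Delta G_A = \Delta G_B$, the condition $\Delta G_A < \Delta G_B$ defining the ``unfavorable'' branch fails, so the transition is governed by the ``otherwise'' branch and $\lambda_{A\rightarrow B} = k$. As a consistency check I would note that even the unfavorable branch collapses to the same value here, since $k\, e^{-\Delta G_B/kT} / e^{-\Delta G_A/kT} = k\, e^{(\Delta G_A - \Delta G_B)/kT} = k$ when $\Delta G_A = \Delta G_B$; the rate function is thus continuous across the branch boundary and takes the value $k$ on it. Consequently every transition in the system occurs at the identical rate $k$, independent of the edge chosen.

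It remains to conclude that uniform rates yield an unbiased random walk. I would appeal to the embedded jump chain of the continuous-time Markov process: from a configuration $A$ with a neighbor set of size $d$, the total exit rate is $dk$ and the holding time is exponential with that parameter, but the probability that the next jump lands on a particular neighbor $B$ is $\lambda_{A\rightarrow B} / \sum_{B'} \lambda_{A\rightarrow B'} = k/(dk) = 1/d$. Since each neighbor is selected with equal probability $1/d$ regardless of the free-energy landscape, the induced discrete process is exactly the unbiased random walk on the configuration graph, which is what the proposition asserts.

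Because the two branches of \cref{eqn:metropolis} already agree when the free energies coincide, there is no genuine analytic obstacle; the only point requiring care is making explicit that ``same rate for every transition'' means ``uniform over neighbors'' for the jump chain, rather than a single global rate that ignores vertex degree. I would therefore keep the (degree-dependent) total exit rate $dk$ distinct from the (degree-independent, uniform) per-neighbor transition probability $1/d$, so that the claim of an \emph{unbiased} walk is unambiguous.
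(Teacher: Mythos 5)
Your proof is correct and follows the same route the paper implicitly takes: the paper states this proposition as an immediate consequence of inspecting \cref{eqn:metropolis} (with equal free energies the ``otherwise'' branch applies and every rate is $k$), which is exactly your case analysis. Your added care in distinguishing the degree-dependent total exit rate $dk$ from the uniform per-neighbor jump probability $1/d$ is a worthwhile clarification of what ``unbiased random walk'' means here, but it does not constitute a different approach.
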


When drawing examples of physical embeddings (e.g., \Cref{fig:LV-simple} and \Cref{fig:LV-general}), 
we draw undirected graphs 
with the understanding that each edge corresponds to two directed edges.

\subparagraph*{Adiabatic (Zero Energy) Computation}
In this paper, we consider machines that are adiabatic in the sense that they release or consume no free energy as heat during execution.

\begin{definition}\label{def:zero-energy}
    Let $P_M = (V,E)$ be a physical embedding of general machine $M$.
    We call $P_M$ an \emph{adiabatic machine} if $\Delta G_U = \Delta G_W$ for all $U, W \in V$.
\end{definition}

Such a machine has kinetic and thermodynamic behavior that is easily reasoned about.
From Definition~\ref{def:zero-energy}, Propositions~\ref{prop:uniform-dist} and~\ref{prop:random-walk} (and the fact that Metropolis dynamics imply Boltzmann distribution at equilibrium) we get the following corollary:

\adiabaticevolution

\end{document}